\newcommand{\ignore}[1]{}
\newtheorem{lemma}{Lemma}
\newtheorem{theorem}{Theorem}
\begin{document}

\title{1D Modeling of Sensor Selection Problem for Weak Barrier Coverage and Gap Mending in\\ Wireless Sensor Networks}


\author{Hamed~Sadeghi,~\IEEEmembership{Student~Member,~IEEE,}
MohammadReza Soroushmehr,~\IEEEmembership{SPS~Member,~IEEE,}
        Shahrokh~Valaee,~\IEEEmembership{Senior~Member,~IEEE,}
        Shahram~Shirani,~\IEEEmembership{Senior~Member,~IEEE}
        and~Shadrokh~Samavi,~\IEEEmembership{Senior~Member,~IEEE}


\thanks{H. Sadeghi and S. Valaee are with the department
of Electrical and Computer Engineering, University of Toronto, ON, Canada, M5S 2E4.\protect\\
E-mail: hsadeghi,valaee@ece.utoronto.ca}


\thanks{M.R. Soroushmehr is with the MCIRCC center, University of Michigan, MI, 48109.
Email: ssoroush@umich.edu}

\thanks{S. Shirani is with the department of Electrical and Computer Engineering, McMaster University, ON, Canada, L8S 4L8.
Email: shirani@ece.mcmaster.ca}

\thanks{S. Samavi is with the department of Electrical and Computer Engineering, McMaster University, ON, Canada, L8S 4L8.
Email: samavi@mcmaster.ca}}


\maketitle

\begin{abstract}
In this paper, we first remodel the line coverage as a 1D discrete problem with co-linear targets. Then, an order-based greedy algorithm, called OGA, is proposed to solve the problem optimally. It will be shown that the existing order in the 1D modeling, and especially the resulted Markov property of the selected sensors can help design greedy algorithms such as OGA. These algorithms demonstrate optimal/efficient performance and have lower complexity compared to the state-of-the-art. Furthermore, it is demonstrated that the conventional continuous line coverage problem can be converted to an equivalent discrete problem and solved optimally by OGA.
Next, we formulate the well-known weak barrier coverage problem as an instance of the continuous line coverage problem  (i.e. a $1$D problem) as opposed to the conventional 2D graph-based models. We demonstrate that the equivalent discrete version of this problem can be solved optimally and faster than the state-of-the-art methods using an extended version of OGA, called $K$-OGA. 
Moreover, an efficient local algorithm, called LOGM, is proposed to mend barrier gaps due to sensor failure. In the case of $m$ gaps, LOGM is proved to select at most $2m-1$ sensors more than the optimal while being local and implementable in distributed fashion.
We demonstrate the optimal/efficient performance of the proposed algorithms via extensive simulations.
\end{abstract}

\begin{IEEEkeywords}
Weak $K$-barrier coverage, line coverage, gap mending, directional/omni-directional sensors, wireless sensor networks
\end{IEEEkeywords}

\IEEEpeerreviewmaketitle

\section{Introduction}\label{sec:introduction}

Coverage is a fundamental problem in wireless sensor networks \cite{cardei04, hou09} that has been defined in different forms including point coverage, path coverage, line coverage and  barrier coverage. 

In the point coverage problem, a number of targets in the area of interest are desired to be $K$-covered, i.e. covered by at least $K$ sensors \cite{fusco09, fan08, wang07}. 
In line coverage, a number of target lines are to be covered by the available sensors \cite{ prakash14, dash12, barun14}. Applications of line coverage include monitoring a hallway, a road network or a border line for intrusion detection. 
Path coverage ensures the trackability of moving objects traversing a specified path in the environment \cite{harada08, ram07}. As an example, for the localizability of the moving objects, any point on the trajectory should be covered by a number of sensors \cite{ram07}. 

Coverage problem has been modeled as a 2D problem in the majority of the literature works.
In fact, the common approach is to model the sensor coverage area as a 2D region and consider 2D points (point coverage) \cite{fusco09}, 2D areas (area coverage) \cite{he15, sharmin15}, 2D overlap among sensors (strong/weak barrier coverage) \cite{kumar05, wang14} or 2D trajectories (path coverage) \cite{ram07} depending on the application.  Even for the line coverage problem, the majority of works use 2D models and  consider lines as 2D targets \cite{prakash14, dash12} and propose to cover them with 2D coverage areas of sensors.

Barrier coverage is a special type of coverage coined by \cite{kumar05}. In barrier coverage, we would like to detect any intruder crossing a belt-shaped region at least by $K$ sensors, i.e. $K$-barrier coverage. 
The problem is categorized into strong and weak $K$-barrier coverage.
In the strong $K$-barrier coverage, researchers deploy/select/modify/move sensors to ensure that if an intruder takes `any' paths crossing the region, it will be detected  by at least $K$ sensors. Weak barrier coverage is a less restricted type in which only parallel paths (perpendicular to the region boundaries) are guaranteed to be covered at least $K$ times. 
Barrier coverage has applications in border protection \cite{wang14},  building surveillance \cite{kumar05}, etc.

To study $K$-barrier coverage, a number of works have focused on sensor deployment \cite{gong16, he14}. 
Others have focused on after deployment sensor modification (selection/rotation/scheduling) to ensure $K$-coverage \cite{wu15, chen13, chen13_2}.
Furthermore, mobile sensors have been proposed in a number of works \cite{wang14, du13, keung12} to mend the gaps arising due to sensor failure, unsuccessful deployment/modification attempts, etc.

In the majority of barrier coverage works, the overlap of the sensors has been modeled using a graph called coverage graph \cite{kumar05, wang14}. In the coverage graph, nodes and edges represent the sensors and their overlap, respectively. In strong barrier coverage graphs, edges show actual overlap while in weak coverage graph they represent projected overlap \cite{wang14}. Projected overlap is the overlap existing among the 1D intervals resulted from the projection of 2D coverage areas onto the boundary of the region.

 In this paper, we study the sensor selection problem for the weak $K$-barrier coverage and weak barrier gap mending.
 Although strong barrier coverage is a 2D problem, the weak barrier coverage is essentially a 1D problem due to the presence of order. Hence, we exploit the existing order and convert the problem to an equivalent discrete line coverage ones defined later. The equivalent problem is solved optimally by a novel greedy algorithm called OGA. OGA is proved to be optimal in terms of selecting the minimum number of sensors and linearly complex in the number of sensors. The Markov property of the selected sensors defined later is a key point in proving the optimality of OGA.
Furthermore, we study the weak barrier gap mending problem and demonstrate that the existing order in this 1D problem enables us to propose a local algorithm similar to OGA, called LOGM. LOGM is efficient and can be implemented in a distributed form. 

 
The main contributions of our work are three folds: First, we introduce a novel (discrete) formulation of the line coverage problem and prove that the continuous case is a special case of it. Then, we propose an optimal greedy algorithm, called OGA, which is proved to be linear-time.
Second, we model the weak barrier coverage as a line coverage problem and propose a generalized version of the OGA problem, called $K$-OGA, to solve it optimally.
Third, we propose a local algorithm, called LOGM, to select sensors efficiently to mend the gap due to sensor failure. The proposed algorithm is proved to select at most $2m-1$ sensors more than the optimal in the $m$-gap scenario.

The remaining sections are organized as follows. In Section \ref{sec:relatedwork}, we review a number of related works and explain the ones used as benchmarks in detail. Section \ref{sec:mathmodel} provides the mathematical models and novel algorithms proposed to solve them.
Section \ref{sec:kbarrier} and \ref{sec:failure} study the $K$-barrier coverage and the gap mending as instances of the line coverage problem.
Section \ref{sec:simres} demonstrates the better performance of the proposed algorithms numerically. Finally, Section \ref{sec:conclusion} concludes the paper.

\section{Related work}\label{sec:relatedwork} 

Barrier coverage and gap mending have been extensively studied in the past decade. Here, we review some recent state-of-the-art works from each application.

Fusco and Gupta \cite{fusco09} proposed a greedy algorithm for selection and orientation of sensors to achieve simple $K$-coverage of targets, i.e. point coverage. We design a greedy benchmark in our simulations inspired by their algorithms.

A number of algorithms have been proposed in the literature to address the line coverage problem \cite{prakash14, dash12}. 
The problem of line coverage for intrusion detection is studied in \cite{prakash14}. It is assumed that sensors are dropped along a line with random deviations and the line coverage and connectivity of the sensors are studied.

The problem of $1$-covering of a number of lines by using the minimum number of sensors is studied in \cite{dash12}. It is proved that the problem is NP-hard and  two constant-factor and one polynomial-time approximation algorithms are proposed to solve the problem for axis-parallel line segments.  

Kumar \emph{et. al.} coined the term `barrier coverage' for the first time \cite{kumar05}. They distinguish strong and weak barrier coverage and propose efficient algorithms to determine whether $K$ barriers exist after the deployment. Furthermore, they propose optimal deterministic deployment to achieve barrier coverage. Finally, they study strong and weak barrier coverage probabilistically for a random deployment of sensors.


A survey on barrier coverage is performed by Tao and Wu \cite{tao15}. They mention that heterogeneity and low complexity are two less addressed problems in the barrier coverage in directional sensor networks. In concord with these demands, the formulations and algorithms proposed in our paper are independent of the sensing (coverage) models and are less complex compared to the state-of-the-art. 

A thorough study of $K$-barrier coverage using both static and mobile directional sensors has been performed by Wang et. al. \cite{wang14}.
First, they investigate the minimum number of sensors required to form strong/weak $K$-barriers. Second, they propose a method to minimize the movement (i.e. energy dissipation) of mobile sensors required for gap mending. To solve the first problem, they propose the concept of weighted barrier graph, which shows the so called strong/weak `overlap' among sensors. Next, they prove that the existence of $K$ barriers is equivalent to having $K$ vertex-disjoint paths in a transformed version of the proposed weighted graph. More importantly, they propose a greedy algorithm that selects $K$ vertex-disjoint paths, one after another with removal of the selected sensors, by using the Dijkstra algorithm. We use this algorithm as one of the main benchmarks in our simulations.

Gap mending for barrier coverage is another important application in the wireless sensor networks. The majority of works have proposed to rotate sensors to mend the barrier gaps \cite{wu15, chen13}. A centralized chain-reaction-based method has been proposed in \cite{chen13} to heuristically mend the gaps by starting from the gap edges.  
A better algorithm in terms of requiring less total rotation (i.e. less energy consumption) has been proposed in \cite{wu15}. They have demonstrated that the total number of gaps and the total gap length is minimized using their algorithm.

\section{Mathematical model}
\label{sec:mathmodel}

In this section, we first propose a novel discrete formulation for the line coverage problem. In this model, a number of co-linear points are considered as targets. This is in contrary to the conventional line coverage formulation in which a line is considered as a continuous two-dimensional line segment located in the region of interest \cite{prakash14, dash12}.  
We will discuss later on how the continuous line coverage problem  can be formulated in terms of the points coverage problem. 
\subsection{Line coverage}
In this problem, the objective is to cover $n_t$ co-linear targets, $T=\{ T_1, \cdots, T_{n_t} \}$, using  a minimum set of $n$ available sensors, i.e. $S=\{ s_1, \cdots, s_n\}$.
In this paper, we use the words `target' and `target points', interchangeably. 
Without loss of generality, targets are assumed to be located on the $x$ axis, where $ x(T_1) \le \cdots \le x(T_{n_t})$. Here, $x(T_i)$ represents the $x$ coordinate of the $i^{th}$  target.

 In this paper, we use line coverage in modeling the barrier-related problems. Hence, we project the sensor 2D coverage areas onto 1D intervals located on the $x$ axis. For the actual line coverage applications, the intervals would be the line segments of intersecting a line with the 2D sensor coverage areas instead of projected ones.  
 
 As depicted in Fig. \ref{fig:sensor_interval}, the coverage area of sensor $s_i$ is considered as a projected interval of $[u_i, v_i]$, regardless of the sensor being directional or omni-directional. We clip the sensor intervals if they extend before $T_1$ or after $T_{n_t}$. Furthermore, the sensors are sorted and numbered based on the start point of their intervals, i.e. $u_1 \le u_2 \cdots \le u_n$.  
In case of a tie, i.e. two or more sensors with equal $u_i$'s, we can assign a smaller number to the sensor with a smaller $v_i$.

\begin{figure}
\includegraphics[height=0.23\textheight,width=0.5\textwidth]{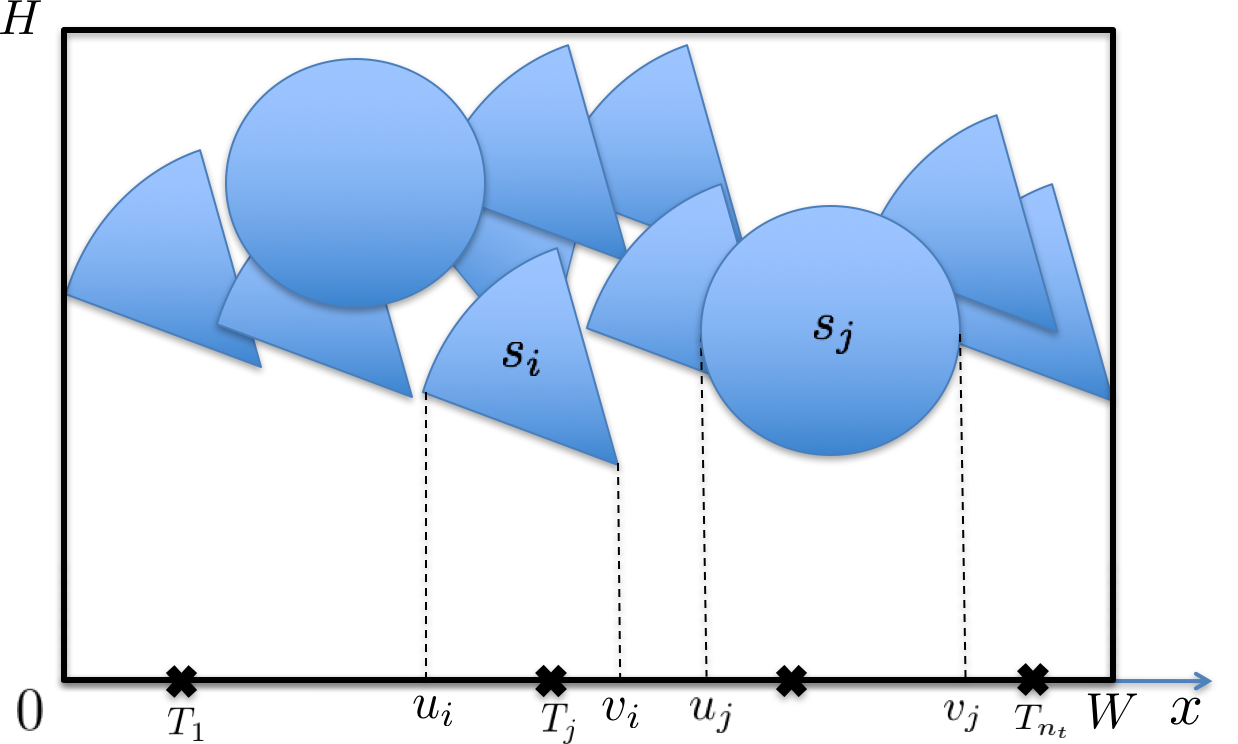}
\caption{Omni-directional/directional sensors coverage areas projection onto intervals on the $x$ axis}
\label{fig:sensor_interval}
\end{figure}

\subsection{Optimal sensor selection for the line coverage}

In sensor selection for the line coverage problem, the objective is to select the \emph{minimum number of sensors} from the set $S$, so that we cover all the targets located on a single line.

\subsection{The OGA algorithm}

The key point in our method is to exploit the existing \emph{order} of targets located on a single line. 
The proposed algorithm selects the sensors in a greedy manner based on the targets order hence the name order-based greedy algorithm (OGA).
OGA starts from the far left target on the line, i.e. $T_1$. It continues increasing the coverage in each step by selecting the sensor with the maximum coverage, i.e. maximum number of covered targets on the right. 
In general discrete targets coverage, the selected sensors are not required to be overlapped. Distant non-overlapping sensors might be selected in case of far distance of consecutive targets. But, we do not skip any targets and cover them in order starting from the left hand-side.
OGA stops once it reaches the furthest target on the right, i.e. $T_{n_t}$.

Here, we assume there exists a subset of sensors, which can cover all targets 
without any gaps, i.e. uncovered targets. 
In case of lack of enough sensors, i.e. having gaps, we add new sensors covering the gaps to $S$. The added sensors do not exist in reality, but help us use the proposed algorithms. In case of lack of enough sensors, there will be gaps after applying the algorithms, which should be mended by rotation of sensors or using mobile sensors as studied in the literature. In this paper, we consider a different scenario of gap mending, where gaps are due to sensor failure and other sensors are available to select from.

The details of the algorithm is listed in Algorithm \ref{alg:OGA}.
At each step, it selects the sensor ($s^g_c$) that covers the maximum number of targets to the right from $S^a_c$, i.e. set  of  sensors available to cover the current target ($T_c$). In case of tie, i.e. two sensors with equal number of covered targets to the right, the sensor with the greatest number of total covered targets (on the right and left) is selected.
The selected sensor is added to the set of selected sensors so far ($S^g$).
Then, the most right-hand side target ($T^c_{R}$) covered by $s^g_c$ is considered as the current target for the next step.  The set of sensors covering $T^c_{R}$ is denoted by $S_{T^c_{R}}$. 
The algorithm continues until the most right-hand side target ($T_{n_t}$) is covered. At this final step, $S^g$ and $N^g$ represent the final set of selected sensors and its cardinality, respectively.



\begin{algorithm}
\SetAlgoLined
\KwData{$S, T$}
\KwResult{$S^g$}
Set $T_{c} \leftarrow T_1$ and $S_{c} \leftarrow S^a_{1}$ \;
 \While{$T_{c} \ne T_{n_t}$}{
Select $s^g_{c}$ from $S_c$ and determine $T^{c}_{R}$ \;
\If {$S^a_c$ does not cover $T^{c}_{R+1}$}{
Select $s^g_{c}$ from $S$ that covers  $T^{c}_{R+1}$
}
   $S^g \leftarrow S^g \cup s^g_{c}$\;
   $S_{c} \leftarrow S_{T^c_{R}}$\;
}
 \caption{The OGA algorithm}
 \label{alg:OGA}
\end{algorithm}

In the following, we first prove a Lemma regarding the properties of the selected sensors. For the Lemma, we define $S_i$ as the set of `selected' sensors for covering $T_i$. Using the Lemma, we prove a theorem on optimality of OGA in terms of selecting the minimum number of sensors.

\begin{lemma}
\label{lemma:ordeed_union}
The set of selected sensors ($S_i$) possesses Markov property on the $x$ axis, i.e.
\begin{equation}
 | ( \bigcup_{k=1}^i S_k) \cap S_{i+1}| =| S_i  \cap S_{i+1}|
 \label{eq:markov}
\end{equation}
\end{lemma}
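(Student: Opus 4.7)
The plan is to interpret $S_i$ as the subset of sensors chosen by OGA whose projected interval $[u,v]$ contains the target point $T_i$, and then to reduce the Markov identity to a convexity statement about intervals on the line. Because all targets are co-linear with $x(T_1)\le\cdots\le x(T_{n_t})$ and every sensor's coverage is a single interval on the $x$ axis, the key observation will be: a selected sensor that covers some $T_k$ with $k\le i$ and also covers $T_{i+1}$ is forced by convexity of its interval to cover the in-between target $T_i$.

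I would prove the identity by set equality rather than by cardinality juggling, establishing
\[
 \Big(\bigcup_{k=1}^{i} S_k\Big)\cap S_{i+1} \;=\; S_i \cap S_{i+1},
\]
which immediately gives the stated cardinality equation. The inclusion $\supseteq$ is trivial: $S_i\subseteq \bigcup_{k=1}^{i} S_k$, so intersecting both sides with $S_{i+1}$ preserves the inclusion.

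For the nontrivial inclusion $\subseteq$, I would take any $s\in\bigl(\bigcup_{k=1}^{i} S_k\bigr)\cap S_{i+1}$. By definition there exists $k\le i$ with $s\in S_k$, so $s$ is a selected sensor whose interval $[u,v]$ satisfies $u\le x(T_k)\le v$; and $s\in S_{i+1}$ gives $u\le x(T_{i+1})\le v$. Combined with the ordering $x(T_k)\le x(T_i)\le x(T_{i+1})$, these inequalities yield $u\le x(T_i)\le v$, so $s$ covers $T_i$. Since $s$ is already in the selected set $S^g$, we conclude $s\in S_i$, and therefore $s\in S_i\cap S_{i+1}$.

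I do not expect a real obstacle in this argument; the only point that needs care is to be explicit that $S_i$ refers to selected sensors covering $T_i$ (so that "selected by OGA" is inherited through the membership $s\in S_k$) and that the intervals $[u_j,v_j]$ after the clipping described earlier remain convex subsets of the $x$ axis. Once those two facts are in place, the proof collapses to the one-line interval-convexity remark above, so the write-up can be kept very short.
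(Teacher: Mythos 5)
Your proof is correct and rests on exactly the same observation as the paper's: a sensor interval covering some $T_k$ with $k\le i$ and also $T_{i+1}$ must, by contiguity of the interval and the ordering of the targets, cover $T_i$. The paper packages this as a proof by contradiction while you give the direct double-inclusion, but the mathematical content is identical.
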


\begin{proof}
As stated, the sensors and targets are sorted and we have $u_1 \le u_2 \cdots \le u_n$ and $x(T_1) \le \cdots \le x(T_{n_t})$. Furthermore, $S_i$'s might have overlap, i.e. share sensors in general.
We prove by contradiction.
 Assume (\ref{eq:markov}) does not hold and implies
\begin{equation}
\exists \quad s_l \quad \mbox{s.t.} \quad  s_l \in (\bigcup_{k=1}^{i-1} S_k \cap S_{i+1}) \quad \& \quad s_l \notin S_{i} 
\end{equation}
i,e. there is a sensor that covers at least one of the targets on the left-hand side of the $T_i$ as well as $T_{i+1}$, but does not cover $T_i$. This is impossible since sensor coverage intervals are contiguous and targets are ordered. This contradiction completes the proof.
\end{proof}

The Markov property helps us derive the local optimality property of the cost function in the sensor selection for $1$-coverage (Theorem \ref{thm:OGAisoptimal}) as well as $K$-coverage (Theorem \ref{thm:KOGA}).

\begin{theorem}
\label{thm:OGAisoptimal}
The OGA algorithm is optimal in selecting the minimum number of sensors.
\end{theorem}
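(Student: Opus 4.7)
My plan is to establish the theorem via a greedy-stays-ahead induction. Let $S^g = (s^g_1, \ldots, s^g_{N^g})$ denote the sequence of sensors OGA selects, and let $S^* = (s^*_1, \ldots, s^*_{N^*})$ be any optimal selection, both ordered so that the $k$-th entry is the sensor responsible for extending coverage past the current rightmost covered target. I would then define $R_k$ and $R_k^*$ as the indices of the rightmost target covered after the first $k$ selections of OGA and of the optimal, respectively. The whole theorem reduces to the invariant $R_k \geq R_k^*$ for all $k \leq N^*$: instantiating at $k = N^*$ gives $R_{N^*}^* = n_t$, which forces $R_{N^*} = n_t$ as well, and hence $N^g \leq N^*$.

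The base case $k = 1$ is immediate, since both $s^g_1$ and $s^*_1$ must cover $T_1$ and OGA's rule selects, from this very candidate set, the sensor whose coverage interval extends to the largest target index on the right. For the inductive step, assuming $R_k \geq R_k^*$, I would examine the optimal's next pick $s^*_{k+1}$: it must cover $T_{R_k^* + 1}$ and reaches at most $T_{R_{k+1}^*}$. The goal is to exhibit a candidate available to OGA at step $k+1$ whose right reach is at least $T_{R_{k+1}^*}$. This is precisely where Lemma~\ref{lemma:ordeed_union} is doing the real work: the Markov property guarantees that the pool of sensors able to advance coverage past $T_{R_k}$ depends only on $T_{R_k}$ itself, not on the prior history of OGA's selections. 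Hence any sensor the optimal invokes at step $k+1$ is also visible to OGA, either directly through $S_{T^c_R}$ or, in the non-overlapping case, through the bridge clause of the algorithm that falls back on the full set $S$.

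The main obstacle I anticipate is the sub-case $R_k > R_k^*$, where I must rule out the scenario in which the optimal has reserved a long-reaching sensor that OGA has supposedly skipped past. This is exactly the failure mode the Markov property excludes: any sensor covering both a target to the left of $T_{R_k}$ and a target to the right of $T_{R_k}$ must itself cover $T_{R_k}$, so it remains in OGA's candidate pool at step $k+1$ and is therefore in contention with the greedy rule's chosen sensor, whose right reach dominates. A smaller and mostly mechanical care-point will be checking that the bridge clause preserves the induction, since the same argument transports from $S_{T^c_R}$ to the enlarged candidate set $S$. Once these are nailed down, chaining $R_k \geq R_k^*$ across $k = 1, \ldots, N^*$ delivers $N^g \leq N^*$ and establishes optimality.
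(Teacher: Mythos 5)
Your route is genuinely different from the paper's. The paper inducts over the target index, writing $C(T_{i+1}) = C(T_i) + C_{i \rightarrow i+1}$ via Lemma~\ref{lemma:ordeed_union} and asserting that greedy and optimal agree on every increment; you induct over the selection step with the ``greedy stays ahead'' invariant $R_k \ge R_k^*$. Your decomposition is the more standard and in principle the more robust one, since it never needs the greedy and the optimal to agree target-by-target, only that the greedy's covered prefix dominates.

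There is, however, a concrete hole in your inductive step, in the sub-case $R_k = R_k^*$. Your exchange argument --- any sensor covering a target to the left of $T_{R_k}$ and one to the right must cover $T_{R_k}$ itself --- only accounts for sensors that straddle $T_{R_k}$. The optimal's next sensor $s^*_{k+1}$ must cover $T_{R_k^*+1} = T_{R_k+1}$, but its interval may begin strictly after $x(T_{R_k})$; such a sensor covers nothing to the left of $T_{R_k}$, so Lemma~\ref{lemma:ordeed_union} says nothing about it and it is absent from OGA's pool $S_{T^c_{R}}$. The bridge clause does not rescue the claim that ``any sensor the optimal invokes at step $k+1$ is also visible to OGA'': it fires only when \emph{no} pool sensor covers $T_{R_k+1}$, so if some pool sensor reaches $T_{R_k+1}$ but stops short of where $s^*_{k+1}$ reaches, OGA commits to the weaker sensor and $R_{k+1} < R_{k+1}^*$. (For instance, with targets at $1,\dots,10$ and sensor intervals reaching targets $1$--$3$, $3$--$5$, $5$--$7$, and $6$--$10$, the pool-at-current-rightmost rule takes all four sensors while three suffice.) To close the gap you must either take the step-$(k{+}1)$ candidate pool to be the sensors covering the first \emph{uncovered} target $T_{R_k+1}$, under which your invariant goes through cleanly, or restrict attention to the continuous/weak-barrier instance, where gapless coverage forces any sensor that extends the covered prefix to contain its current right endpoint and the problematic case cannot arise. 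As written, the proposal does not yet establish the theorem for the general discrete model.
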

\begin{proof}
Define $C(T_i)$ as a cost function showing the number of selected sensors covering the targets from $1$ to $i$, i.e. $T_1, \cdots, T_i$. We have
\begin{eqnarray}
&C(T_{i+1})=  | \bigcup_{k=1}^{i+1} S_k| \nonumber\\
=  & | \bigcup_{k=1}^{i} S_k| + |S_{i+1}| - | ( \bigcup_{k=1}^i S_k) \cap S_{i+1}|
\end{eqnarray}
Using (\ref{eq:markov}), we can rewrite the cost function as
\begin{eqnarray}
 &C(T_{i+1})  = | \bigcup_{k=1}^{i} S_k| + |S_{i+1}| - | S_i  \cap S_{i+1}| \nonumber\\
 &= C(T_i)+ C_{i \rightarrow i+1}
\end{eqnarray}
where $C_{i \rightarrow i+1}= |S_{i+1}| - | S_i  \cap S_{i+1}|$ is the number of new sensors selected to cover $T_{i+1}$. As seen, the defined cost function, i.e. $C(\cdot)$, has the local optimality  property (i.e. Bellman's equation \cite{bellman} in dynamic programming literature). Hence, an optimal algorithm for selecting the minimum number of sensors will be optimal at every step. 
We provide the proof of optimality by induction. For $T_1$, both the optimal and the proposed greedy algorithm have a cost of $1$. If they are equivalent in terms of $C(T_i)$, then we prove their $C(T_{i+1})$'s are also equivalent. This is easy to show
since the greedy method minimizes $C_{i \rightarrow i+1}$ in every step and it completes the proof.
\end{proof}

The following theorem, analyzes the average order of complexity of OGA. 
\begin{theorem}
The average complexity of OGA is $\mathcal{O}(n)$.
\end{theorem}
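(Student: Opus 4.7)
The plan is to show that OGA is a monotone left-to-right sweep over the pre-sorted sensor and target lists, and to bound the total work by amortizing against sensors rather than against iterations of the outer loop.

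First, I would observe that the outer \texttt{while} loop advances strictly: the current target $T_c$ is reassigned to $T^c_R$, the rightmost target covered by the newly selected sensor $s^g_c$, and by construction $T^c_R$ lies strictly to the right of the previous $T_c$ (otherwise no progress would be made and the no-gap assumption would fail). Since a fresh sensor is appended to $S^g$ at every iteration, the outer loop runs at most $|S^g| \le n$ times.

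Next, I would bound the per-iteration cost. Each iteration must enumerate the candidate set $S_c$ of sensors covering $T_c$, pick within it the sensor maximizing the rightmost covered target, and advance the pointer. Because the sensors are pre-sorted by $u_i$ and the targets by $x$-coordinate, $S_c$ can be maintained incrementally with a single sweep pointer: a sensor $s_j$ enters the candidate window the first time $u_j \le x(T_c)$ and is discarded the first time $v_j < x(T_c)$. Over the entire execution each sensor is therefore inserted and removed at most once, contributing $\mathcal{O}(n)$ to the total bookkeeping. Lemma~\ref{lemma:ordeed_union} is used here to certify that a sensor once discarded never needs to be revisited.

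The remaining cost is the scan inside $S_c$ at each of the $\mathcal{O}(n)$ iterations, which is $\mathcal{O}(|S_c|)$. Under a standard random deployment, the expected number of sensor intervals covering any fixed point on the $x$ axis is bounded by the product of the mean interval length and the sensor density, hence $\mathbb{E}[|S_c|] = \mathcal{O}(1)$. Summing, the expected total cost of the inner scans is $\mathcal{O}(n)$, yielding the claimed $\mathcal{O}(n)$ average complexity. The main obstacle I anticipate is formalizing the word ``average'' — in particular, specifying the random deployment model under which $\mathbb{E}[|S_c|] = \mathcal{O}(1)$ holds, since for an adversarial configuration one could have $|S_c| = \Theta(n)$ and the bound would degrade unless $S_c$ is maintained with an auxiliary priority structure keyed by $v_j$.
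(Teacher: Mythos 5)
Your proof is essentially sound but takes a different route from the paper's, and the difference is instructive. The paper's argument rests on a cancellation: each greedy maximization scans $n_{com}$ candidates (the average number of sensors covering a point), and the average number of iterations is taken to be $\bar{N}^g = n/n_{com}$ (since denser overlap means fewer selected sensors), so the product $n_{com}\cdot\bar{N}^g = n$ holds \emph{regardless} of whether the overlap density is constant. You instead bound the iteration count by the loose $|S^g|\le n$ and then must separately assume $\mathbb{E}[|S_c|]=\mathcal{O}(1)$ to keep the total linear --- the very assumption you flag as the weak point at the end. Had you paired your iteration count with the paper's observation that the number of selected sensors scales as $n/n_{com}$, the constant-density assumption would be unnecessary. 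On the other hand, your amortized sweep-pointer argument (each sensor enters and leaves the candidate window once, $\mathcal{O}(n)$ total bookkeeping) is a genuine addition: the paper explicitly excludes the cost of maintaining the sensor-target coverage relation from its analysis, whereas you account for it. One small caution: Lemma~\ref{lemma:ordeed_union} concerns the \emph{selected} sets $S_i$, not the candidate sets, so invoking it to justify that a discarded candidate never returns is a slight misattribution --- the correct justification is simply that intervals are contiguous and targets are visited in increasing order of $x$. Neither proof is fully rigorous about what ``average'' means, so your closing caveat applies to the paper's argument as well.
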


\begin{proof}
We focus on the OGA algorithm steps and their order of computational complexity.
In the complexity analysis, we do not consider the complexity of finding the coverage relationship between the sensors and targets, i.e. which sensors covering which targets.

Starting from $T_1$, we have to find the sensor with the maximum $v_i$ among a maximum of $n_{com}$ sensors, where $n_{com}$ represents the average number of overlapping sensors covering a single point on the line segment. We perform such maximizations $\bar{N}^g$ times, where $\bar{N}^g \le n$ is the average total number of sensors selected by the OGA, hence $\bar{N}^g=\frac{n}{n_{com}}$. In conclusion, the algorithm average complexity is of order $\mathcal{O}\left(n_{com} \cdot \bar{N}^{g} \right)= \mathcal{O}(n)$.
\end{proof}


\subsection{Continuous version of the line coverage problem}

A more conventional setting of the line coverage problem is depicted in Fig. \ref{fig:con_line}. As seen, a line segment from point $a$ to point $b$ is desired to be covered by sensors coverage intervals.
The sensor intervals are resulted from the projection of the sensor 2D coverage areas, i.e. sectors or circles, onto the target line.

Although our previous formulation possesses discrete targets, any continuous line coverage problem can be converted to an equivalent discrete one as depicted in Fig. \ref{fig:con2dis}. As seen, the equivalent targets are considered as the middle points of the sensor intervals.  
In this conversion, if the number of sensors is $n$, we will have a maximum of $2n-1$ targets in the case that no coincidences exists among the interval start or end points.

\begin{figure}[!t]
\center
\includegraphics[height=0.13\textheight,width=0.42\textwidth]{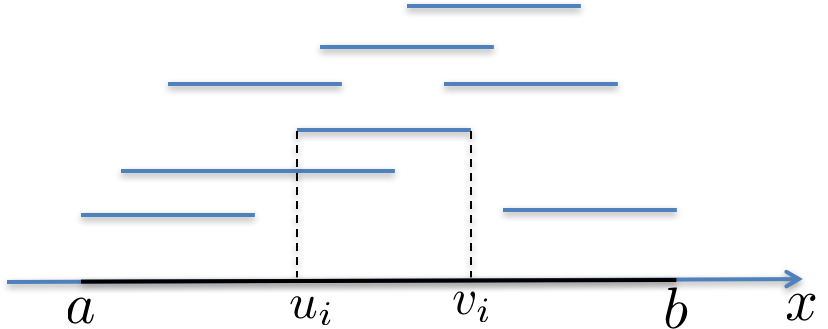}
\caption{The coverage areas of sensors projected onto intervals on the $x$ axis}
\label{fig:con_line}
\end{figure}

Other than conversion to the proposed discrete model, we propose a dependent  algorithm, called continuous OGA, to solve the continuous problem. The details of the algorithm is depicted in Algorithm \ref{alg:con_OGA}.
It starts from Point $a$ and selects the sensor ($s^g_c$) with the largest coverage interval spread to the right ($v^g_c$), among the sensors available to cover it ($S^a_a$). 
In case of tie, i.e. two sensors with equal spread to the right, the sensor with the largest interval  is selected.
In the next step, the selection continues from $v^g_c$ and stops when $b$ is covered.

\begin{algorithm}
\SetAlgoLined
\KwData{$S, [a, b]$}
\KwResult{$S^g$}
Start from point $a$, $S_{c} \leftarrow S^a_{a}$ \;
 \While{$v^g_{c} < b$}{
 Select $s^g_c$ from $S_{c}$ and determine $v^g_c$ \;
   $S^g \leftarrow S^g \cup s^g_{c}$\;
   $S_{c} \leftarrow S^a_{v^g_{c}}$\;
}
 \caption{The continuous version of the OGA algorithm}
 \label{alg:con_OGA}
\end{algorithm}

The average complexity of Algorithm \ref{alg:con_OGA} is of order $\mathcal{O}(n)$ and the proof is similar to that of Algorithm \ref{alg:OGA}.
Now, we demonstrate the applications of the proposed line coverage model in the weak $K$-barrier coverage and gap mending.  

\section{Sensor selection for weak $K$-barrier coverage}
\label{sec:kbarrier}

The sensor selection problem for weak $K$-barrier coverage is to select the minimum number of sensors to form $K$ weak barriers. 
In the conventional weak $K$-barrier coverage, $K$ paths of weakly overlapped sensors should be formed. Weakly overlapped sensors refer to those overlapped after projection onto a line \cite{wang14}, i.e. sensors with overlapped coverage intervals. We refer the reader to the scenario depicted in Fig. \ref{fig:sensor_interval}.

We model this problem as a line coverage problem in which all the points on the coverage interval of interest ($[0,W]$) have to be covered by at least $K$ sensors. Since the sensor intervals are the results of orthogonal projection of 2D coverage areas onto the $x$ axis, the new problem is equivalent to the original weak coverage problem.
We convert the continuous problem of $K$-coverage of $[0,W]$ to a discrete problem using the conversion  method discussed before. Once conversion is performed, the special case of $1$-coverage can be solved directly using OGA. For the general case of $K$-coverage, we propose a generalized version of OGA, called $K$-OGA.  

\subsection{The $K$-OGA algorithm}
$K$-OGA is a generalized version of OGA, which solves the weak $K$-barrier coverage problem optimally. 
The details of $K$-OGA is listed in Algorithm \ref{alg:KOGA}.
Assume $NC_s$ and $RS_s$ represent the set of NOT $s$-covered targets and the set of remaining  (not selected) sensors by the $s^{th}$ step of the algorithm, respectively. $K$-OGA starts with all targets in $NC_1$ and sets $RS_1=S$. It applies OGA to $1$-cover all the targets and removes the selected sensors from $RS_s$. Furthermore, it updates $NC_2$ since some targets might be $2$-covered by the overlapping sensors selected in the first step. The process continues and $K$-OGA attempts to increase the coverage of targets by one in each step using OGA. It stops when all targets are $K$-covered, i.e. $NC_K= \emptyset$.
 $S^g$ shows the set of selected sensors in Algorithm \ref{alg:KOGA}.
 In the next Theorem, we prove the optimality of $K$-OGA.

\begin{algorithm}
\SetAlgoLined
\KwData{$S, T, K$}
\KwResult{$S^g$}
$s=1$\;
$RS_1 \leftarrow S$\;
Update $NC_{1}$\;
Apply OGA to $NC_1$\;
 \While{$NC_K  \ne \emptyset$}{
$s\leftarrow s+1$\;
Update $RS_{s}$ and $NC_{s}$\;
Apply OGA to $NC_{s}$\;
}
 \caption{The $K$-OGA algorithm}
 \label{alg:KOGA}
\end{algorithm}


\begin{theorem}
\label{thm:KOGA}
$K$-OGA is optimal in selecting the minimum number of sensors for $K$-coverage.
\end{theorem}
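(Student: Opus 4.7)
The plan is to prove Theorem~\ref{thm:KOGA} by lifting the Bellman-equation framework of Theorem~\ref{thm:OGAisoptimal} to the $K$-coverage setting, combined with induction on $K$. The base case $K=1$ is exactly Theorem~\ref{thm:OGAisoptimal}, since $K$-OGA reduces to a single invocation of OGA.

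The first substantive step is to extend Lemma~\ref{lemma:ordeed_union} to $K$-coverage. For any feasible $K$-cover, let $S^K_i$ denote the subset of its selected sensors that cover $T_i$ (so $|S^K_i| \ge K$). The same contiguity-of-intervals plus ordered-targets argument used in Lemma~\ref{lemma:ordeed_union} shows that a sensor covering some $T_k$ with $k \le i$ together with $T_{i+1}$ must also cover $T_i$, which yields the Markov identity $|(\bigcup_{k=1}^i S^K_k)\cap S^K_{i+1}| = |S^K_i \cap S^K_{i+1}|$. Defining the $K$-cover cost $C_K(T_i) := |\bigcup_{k=1}^i S^K_k|$, this identity gives the Bellman recurrence
\begin{equation}
C_K(T_{i+1}) = C_K(T_i) + |S^K_{i+1}| - |S^K_i \cap S^K_{i+1}|,
\end{equation}
so the optimal total cost decomposes into a sum of per-step increments $C^K_{i\to i+1} := |S^K_{i+1}| - |S^K_i \cap S^K_{i+1}|$, each of which can be minimized independently.

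The second step is to verify that $K$-OGA does in fact minimize $C^K_{i\to i+1}$ at every transition. I would fix a transition $T_i \to T_{i+1}$ and examine the $K$ selection rounds. By the inductive hypothesis, after the first $K-1$ rounds the partial selection is an optimal $(K-1)$-cover, and in particular its contribution to the increment at this transition is minimized. In the $K$-th round, $K$-OGA runs OGA on $NC_K$ with the remaining sensors $RS_K$, which by Theorem~\ref{thm:OGAisoptimal} selects the minimum number of new sensors needed to complete the $K$-th coverage layer at $T_{i+1}$. Summing the optimally minimized contributions over rounds gives the minimum increment, and summing over $i$ yields the minimum $C_K(T_{n_t})$.

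The main obstacle will be justifying that the round-by-round greedy choices actually compose into a globally optimal selection, since a locally optimal first-round 1-cover could in principle leave a harder residual problem for subsequent rounds. I would close this gap by an exchange argument mirroring the induction used inside the proof of Theorem~\ref{thm:OGAisoptimal}: given any optimal $K$-cover, swap its round-1 selection (which exists because any $K$-cover contains a 1-sub-cover on interval systems) for OGA's greedy round-1 selection one target at a time, using the Markov property to guarantee feasibility is preserved and cardinality is not increased; then recurse on the induced $(K-1)$-cover problem using the inductive hypothesis. This swap-and-recurse step is the only subtle piece of the argument and is where the interval structure (contiguity) is doing all the work.
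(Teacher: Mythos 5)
Your overall strategy---lift the Markov/Bellman decomposition of Theorem~\ref{thm:OGAisoptimal} to $K$-coverage and induct on $K$---is the same one the paper uses, but the two arguments diverge at the crucial compositional step. The paper first constructs an explicit ``optimal method'' (a single left-to-right sweep that, at each deficient point, takes the $K$ sensors extending furthest right, reusing already-selected ones), asserts its optimality from the local-optimality of $C^K$, and then argues by induction on $K$ that $K$-OGA's successive OGA rounds select exactly the same sensors as that sweep. You instead compare $K$-OGA's layers directly to an arbitrary optimal $K$-cover via a swap-and-recurse exchange. Your version has the merit of naming the real difficulty honestly---that a greedily chosen first layer might leave a harder residual problem---which the paper's proof never confronts (it simply declares its sweep optimal and then matches $K$-OGA to it).

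That said, your exchange step is a plan rather than a proof, and it has one concrete mismatch you should repair. After round~1, the residual problem is \emph{not} a uniform $(K-1)$-cover problem: OGA's round-1 sensors necessarily overlap at their junctions, so some targets are already $2$-covered, and the residual demand at target $T_j$ is $K-c_j$ for a non-constant $c_j$---this is exactly what the $NC_s$ bookkeeping in Algorithm~\ref{alg:KOGA} tracks. Your induction on $K$ therefore needs to run on a vector of residual demands (or you need a separate argument that surplus coverage at junctions can only help). Relatedly, the sub-cover you extract from the optimal $K$-cover and swap out must be made sensor-disjoint from OGA's round-1 selection before you recurse, since rounds $2,\dots,K$ of $K$-OGA draw only from $RS_2=S\setminus G_1$; otherwise the inductive hypothesis is being applied to a different sensor pool than the algorithm actually uses. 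Closing those two points would make your route rigorous---and arguably more rigorous than the paper's own argument.
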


\begin{proof}
The proof has two parts. First, we find an optimal sensor selection method based on the local optimality property of the cost function. Then, we prove that $K$-OGA is equivalent to the optimal method and selects the same sensors.

In the first step, we find an optimal method. Define $C^K(T_i)$ as the number of sensors selected for the $K$-coverage of targets (from the left in a sorted list of targets) up to the target $i$. Similar to the proof of Theorem \ref{thm:OGAisoptimal}, we can see that $C^K(T_i)$ also has local optimality property.
Hence, we can exploit the property (order) in $K$-covering the targets. In fact, the optimal method can start from Point $a$ and selects $K$ sensors (forming $S_a$) with the largest $v_k$'s to minimize $C^K_{a \rightarrow 1}$. Next, it considers the next target ($2^{nd}$ one) and attempts to select the same sensors as in $S_a$ to keep the cost low. For the new sensors, it selects the longest ones to the right.
It continues this procedure until all targets are $K$-covered or there is no sensors to select from. 

In the next step, we prove by induction on $K$ that $K$-OGA selects the same sensors as the optimal method. In case of $K=1$, $K$-OGA is the same as OGA, which is a special case of the mentioned optimal method, i.e. for $K=1$. Now, assume that both methods select the same sensors for $K=m$. We have to prove that they select the same sensors for $K=m+1$. This is easy to prove by noticing that both methods are order-based and start from the point $a$ (or the first not $m+1$-covered point). For $K=m+1$, the optimal method selects the $m+1$ longest sensors that are covering $a$. The first $m$ longest sensors are the same as those selected for $K=m$, which are also selected by $K$-OGA. The $(m+1)^{th}$ sensor selected by the optimal method is also selected by $K$-OGA to provide $m+1$ coverage.  For the rest of targets, they both skip  the already $m+1$-covered targets to reach the first non-covered point. This procedure continues and both methods select the same sensors. 
\end{proof}

\begin{theorem}
The average complexity of the $K$-OGA algorithm is $\mathcal{O}(Kn)$.
\end{theorem}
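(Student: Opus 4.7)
The plan is to leverage the complexity result already proved for OGA and argue that $K$-OGA is essentially $K$ invocations of OGA, plus bookkeeping that is no more expensive per round.

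First, I would recall from the earlier theorem that a single run of OGA has average complexity $\mathcal{O}(n)$, where $n$ is the (current) number of sensors considered. Inspecting Algorithm \ref{alg:KOGA}, the outer \texttt{while} loop increments $s$ until $NC_K = \emptyset$, which can happen at most $K$ times since each iteration is intended to raise the coverage multiplicity of every still-uncovered target by one. Hence the algorithm makes at most $K$ calls to OGA.

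Second, I would bound the cost of each iteration. In step $s$, OGA is run on the sensor pool $RS_s$ and target subset $NC_s$. Since $|RS_s| \le n$ and the average analysis of OGA depended on $n_{\text{com}} \cdot \bar{N}^g = \mathcal{O}(n)$, the $s$-th call costs $\mathcal{O}(n)$ on average. The updates of $RS_s$ (remove the sensors just selected) and $NC_s$ (drop targets that have reached the required multiplicity via the sensors chosen so far) can be carried out in a single sweep per iteration in $\mathcal{O}(n + n_t)$ time, which is absorbed into $\mathcal{O}(n)$ given that $n_t = \mathcal{O}(n)$ under the discretization described for the continuous problem (at most $2n-1$ targets).

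Third, summing over the $K$ rounds yields $K \cdot \mathcal{O}(n) = \mathcal{O}(Kn)$, as claimed.

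The only subtle point, and the one I would make sure to state carefully, is the claim that at most $K$ iterations of the outer loop occur. This follows from the optimality argument in Theorem \ref{thm:KOGA}: at iteration $s$, OGA selects a set of sensors that $1$-covers $NC_s$, so every target is covered one additional time compared to before, meaning a target that was already $(s-1)$-covered becomes $s$-covered. Therefore $NC_{s+1}$ contains only targets whose coverage multiplicity is still strictly less than $s+1$, and after $K$ rounds $NC_K = \emptyset$, terminating the loop. No other step in the algorithm is asymptotically more expensive than the OGA call it accompanies, so the total average complexity is $\mathcal{O}(Kn)$.
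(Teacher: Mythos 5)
Your proof is correct and follows essentially the same route as the paper: $K$-OGA is at most $K$ invocations of OGA, each costing $\mathcal{O}(n)$ on average, giving $\mathcal{O}(Kn)$ in total. The paper additionally estimates the expected number of not-yet-$s$-covered targets in each round (arguing it approaches $N_t = 2n-1$), whereas you simply upper-bound each round by $\mathcal{O}(n)$ and account for the set-update bookkeeping, which is a cleaner way to reach the same bound.
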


\begin{proof}
The first step of the algorithm runs OGA and has an average complexity order of $\mathcal{O}(n)$. After the first step, the average number of targets that need $2$-coverage reduces to $N_t \times (1-\frac{1}{n_{com}})$, where $N_t=2n-1$ if no sensor intervals are the same.
In the $i^{th}$ step, the average number of targets required to be considered for $i$-coverage is $N_t \times(1-(\frac{1}{n_{com}})^i)$, which gets closer to $N_t$ as $i$ increases. Hence, the average number of targets can be approximated by $N_t=2n-1$ in every step. Hence, the average complexity of $K$-OGA is approximately equal to $K$ times that of OGA, i.e.  $\mathcal{O}(Kn)$.
\end{proof}

In \cite{wang14}, two algorithms, called optimal and greedy versions of Min-Num-Mobile(k) algorithm are proposed for selecting the minimum number of static/mobile sensors for $K$-barrier coverage.
The complexity of both algorithms is of order $\mathcal{O}(kn^2)$.
The proposed algorithms can be applied in our problem if the assumed mobile sensors are replaced with the gap sensors introduced here. We use their greedy algorithm in Section \ref{sec:simres} as a benchmark. It should be noted that the complexity of $K$-OGA is considerably less than their method due to the exploitation of order in the 1D modeling.

\begin{figure}[t!]
\center
\includegraphics[height=0.12\textheight,width=0.42\textwidth]{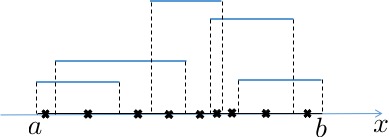}
\caption{Continuous to discrete conversion in the line coverage problem}
\label{fig:con2dis}
\end{figure}
%

%

\section{Efficient sensor selection for weak barrier gap mending in case of sensor failure}
\label{sec:failure}

In practice, for weak $K$-barrier coverage, sensors are first deployed  (randomly or strategically) and then possibly modified (selected/rotated) to form $K$ weak barriers.
The deployed/modified sensors might fail to provide coverage after a while  due to mechanical failure,  drain of battery, etc. One of the strategies to mend the resulted barrier gaps is to select the minimum number of new sensors from the remaining ones, i.e. not selected initially.

One trivial algorithm for sensor selection is to re-apply a new instance of OGA algorithm to the set of all sensors except the failed one to achieve an optimal solution. Here, we propose another algorithm, called LOGM (locally optimal gap mending), which can locally mend the gap without running the OGA over the entire set of remaining sensors. Here, we assume the failed sensors were selected by OGA after deployment. Hence, the failed sensors are not any randomly selected ones.

\subsection{The LOGM algorithm}
Let $[u^*, v^*]$ denote the coverage interval of a missing sensor. After sensor failure, still some parts of this interval might be covered by the previously selected sensors. Hence, the actual gap interval is $[u_G, v_G]$ as depicted in Fig. \ref{fig:sensorfailure} (If not, $u_G=u^*$ and/or $v_G=v^*$). In the depicted scenario, $v_G=u_{i^*+1}$. 
It should be noted that both the start ($u_G$) and end ($v_G$) of a gap could be end or start of a sensor interval coverage depending on where the gap exists.
The failed sensor interval is depicted as a red dashed line segment.

The LOGM performs sensor selection similar to OGA. In fact, it adopts the sensors previously selected by the previous instance of OGA.
The only difference between LOGM and the new instance of OGA  is that LOGM starts new sensor selection from $u_G$ and stops once $v_G$ is covered. Hence, LOGM is local and runs faster compared to a new instance of OGA. But, it might result in a non-optimal number of sensors since the newly selected sensors might have unnecessary overlaps with the adopted sensors.

Consider Fig. \ref{fig:sensorfailure2}, which includes more details regarding the selected sensors by both algorithms. The blue solid lines represent the sensors initially selected by the previous instance of OGA. These sensors are completely adopted by LOGM. The black dotted ones are the sensors that were not selected initially, but selected by both LOGM and OGA  to mend the gap. LOGM stops after selecting the $s_k$ since the gap is mended by then. However, OGA continues the sensor selection until it covers Point $b$.

Assume $S^l(a, \alpha)$ and $S^g(a, \alpha)$ represent the selected sensors by  LOGM and OGA, respectively, to $1$-cover $[a, \alpha]$.
In the next lemma, we prove that $|S^g(a, \beta)|$, i.e. the number of sensors selected by OGA to cover $[a, \beta]$, is an increasing function of $\beta$.

\begin{lemma}
\label{lem:increase}
 If $\beta_1 < \beta_2$,  then $|S^g(a, \beta_1)|  \le |S^g(a, \beta_2)|$.
\end{lemma}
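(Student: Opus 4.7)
The statement is intuitively clear: covering a longer interval should never require fewer sensors. The plan is to prove it by coupling two executions of OGA and exploiting the fact that OGA's decisions are purely local.

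First, I would observe that the continuous OGA (Algorithm \ref{alg:con_OGA}) is a deterministic procedure whose choice of sensor at each step depends only on the current rightmost covered position $v^g_c$ and the collection $S^a_{v^g_c}$ of sensors available to cover it. In particular, the target endpoint $\alpha$ enters the algorithm only through the while-loop test $v^g_c < \alpha$ and does not influence any individual selection or any tie-breaking rule. Consequently, if we run OGA twice starting from the same point $a$ with the same sensor pool $S$ but with two different endpoints $\beta_1 < \beta_2$, the two runs produce an \emph{identical} sequence of sensor selections for as long as neither has terminated.

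Next, I would let $k = |S^g(a,\beta_1)|$ and write $S^g(a,\beta_1) = \{s_1^g,\ldots,s_k^g\}$ in the order chosen by OGA. By the coupling argument above, the first $k$ steps of OGA executed with endpoint $\beta_2$ pick exactly the same sensors $s_1^g,\ldots,s_k^g$. After step $k$, the current right-coverage value $v^g_c$ satisfies $v^g_c \ge \beta_1$ (this is precisely the stopping criterion that halted the first run). Two cases then remain: either already $v^g_c \ge \beta_2$, in which case the second run also terminates at $k$ sensors and the inequality holds with equality; or $v^g_c < \beta_2$, in which case the second run must select at least one additional sensor, giving $|S^g(a,\beta_2)| \ge k+1$. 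In either case $|S^g(a,\beta_1)| \le |S^g(a,\beta_2)|$.

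Because the whole argument reduces to the locality of OGA's greedy rule, there is no genuine obstacle here. The only mild subtlety is the tie-breaking convention used inside OGA; but as noted above, the rule depends on the current position and on the set of available sensors rather than on $\alpha$, so the two coupled executions make identical tie-breaking decisions and the argument is unaffected.
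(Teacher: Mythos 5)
Your proof is correct and follows essentially the same route as the paper's: both arguments observe that the OGA run for endpoint $\beta_2$ reproduces the selections of the run for $\beta_1$ up to the point where $\beta_1$ is covered, and then either terminates (equality) or must select further sensors (strict inequality). You merely make the coupling of the two executions explicit, which the paper leaves implicit.
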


\begin{proof}
OGA starts sensor selection from point $a$. Denote the sensor  with the largest $v$ in $S^g(a, \beta_1)$ by $s_{max}$. If $s_{max}$ covers $\beta_2$, no more sensors are required to be selected, i.e. $|S^g(a, \beta_1)|  = |S^g(a, \beta_2)|$. Otherwise, more sensors must be selected until $\beta_2$ is covered, i.e. $|S^g(a, \beta_1)|  < |S^g(a, \beta_2)|$.
\end{proof}

Based on Lemma \ref{lem:increase}, it is straight forward to conclude that $|S^g(\alpha_1, \beta)|  \ge  |S^g(\alpha_2, \beta)|$, if $\alpha_1 < \alpha_2$.
In the next theorem, we prove that LOGM selects at most one sensor more than the optimal, i.e. a new instance of OGA applied from the beginning to mend the gap.

\begin{figure}
\center
\includegraphics[height=0.12\textheight,width=0.45\textwidth]{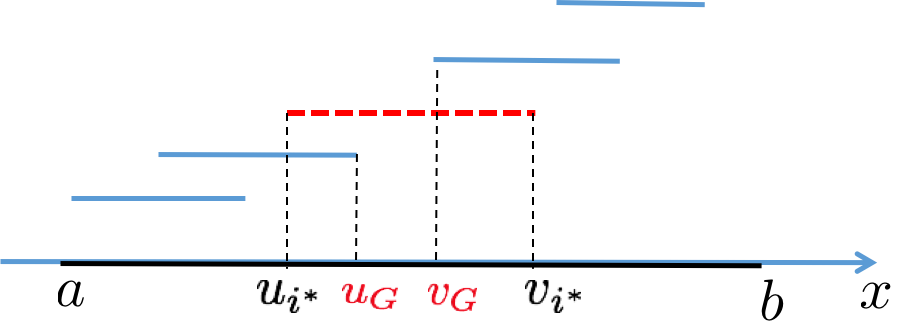}
\caption{A sensor failure scenario}
\label{fig:sensorfailure}
\end{figure}

\begin{theorem}
\label{thm:LOGM_max_1}
In the single-gap case, LOGM selects at most 1 sensor more than the optimal approach, i.e. $n_{OGA} \le n_{LOGM} \le n_{OGA}+1$.
\end{theorem}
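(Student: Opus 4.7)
The plan is to couple LOGM with a fresh run of OGA on $S\setminus\{s^*\}$, where $s^*$ is the failed sensor, and to control the divergence between the two. The lower bound $n_{OGA}\le n_{LOGM}$ is immediate from the optimality of OGA (Theorem~\ref{thm:OGAisoptimal}), so only the upper bound $n_{LOGM}\le n_{OGA}+1$ requires real work.

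Denote the sensors selected by the original (pre-failure) OGA, in selection order, by $\sigma_1,\ldots,\sigma_M$, with $s^*=\sigma_i$, so that $u_G=v_{\sigma_{i-1}}$ and $v_G=u_{\sigma_{i+1}}$. First, I would verify that the fresh OGA re-selects $\sigma_1,\ldots,\sigma_{i-1}$ in the same order: at each step $j<i$, removing $s^*$ cannot alter the greedy max-$v$ choice at $v_{\sigma_{j-1}}$, because $\sigma_j$ was already the maximum there. Second, at $u_G$ both the fresh OGA and LOGM face effectively the same candidate pool, since the adopted right-of-gap sensors $\sigma_{i+1},\ldots,\sigma_M$ all begin at $v_G>u_G$ and hence do not cover $u_G$. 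Iterating this observation, I would argue by induction that the two algorithms select the same sequence of new sensors $\tau_1,\ldots,\tau_k$, where $\tau_k$ is the first one with $v_{\tau_k}\ge v_G$: so long as the current point is strictly below $v_G$, none of $\sigma_{i+1},\ldots,\sigma_M$ enters either pool as a candidate.

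Past this common prefix LOGM stops, leaving $n_{LOGM}=(M-1)+k$, whereas the fresh OGA continues from $v_{\tau_k}$ and selects $r$ further sensors to reach $b$, giving $n_{OGA}=(i-1)+k+r$. Subtracting yields $n_{LOGM}-n_{OGA}=(M-i)-r$, so the theorem reduces to the estimate $r\ge (M-i)-1$. This is where Lemma~\ref{lem:increase} and its left-endpoint monotonicity corollary enter. Optimality of the original OGA implies $|S^g(v_G,b)|=M-i$ (otherwise $\sigma_1,\ldots,\sigma_i$ combined with a cheaper cover of $[v_G,b]$ would give fewer than $M$ sensors on $[a,b]$, contradicting Theorem~\ref{thm:OGAisoptimal}). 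Because $\tau_k$ covers $v_G$, the first sensor $\rho_1$ selected by an OGA run starting at $v_G$ satisfies $v_{\rho_1}\ge v_{\tau_k}$, so applying the corollary gives $|S^g(v_G,b)|=1+|S^g(v_{\rho_1},b)|\le 1+|S^g(v_{\tau_k},b)|=1+r$, i.e.\ $r\ge (M-i)-1$.

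The hardest step is the coupling argument in the second paragraph: one must carefully verify that LOGM and the fresh OGA truly agree through $\tau_k$, which hinges on the fact that no adopted right-of-gap sensor becomes a candidate before the gap is bridged. Once this coupling is established, the final inequality is a clean application of the corollary to Lemma~\ref{lem:increase}, and the \textquotedblleft$+1$\textquotedblright\ slack reflects the at-most-one-sensor waste LOGM incurs by being forced to anchor its first new selection at $u_G$ rather than at the possibly more advantageous point $v_G$.
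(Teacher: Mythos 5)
Your proof is correct and follows essentially the same route as the paper's: both couple LOGM with a fresh OGA run on the surviving sensors, localize the potential loss to the single adopted sensor immediately to the right of the gap (the paper's $s_{i^*+1}$, your $\sigma_{i+1}$), and invoke Lemma~\ref{lem:increase} together with its left-endpoint corollary to cap that loss at one. Your rendering is in fact the more rigorous of the two --- the explicit coupling through $\tau_1,\ldots,\tau_k$, the counting identity $n_{LOGM}-n_{OGA}=(M-i)-r$, and the verified sub-claim $|S^g(v_G,b)|=M-i$ make precise what the paper argues informally from the worst-case configuration in Fig.~\ref{fig:sensorfailure2}.
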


\begin{proof}
The OGA method for gap mending is optimal hence $n_{OGA} \le n_{LOGM}$ is obvious. 
We have
\begin{equation}
S^l(a, v_G)=S^g(a, v_G) 
\end{equation}
since LOGM adopts the previous and new sensors selected by OGA to cover $[a, v_G]$.
As seen in Fig. \ref{fig:sensorfailure2}, $s_k$ is the last sensor selected by LOGM to mend the gap. Other sensors on the right, previously selected by OGA, including $s_{i^*+1}$ and $s_{i^*+2}$ are adopted by LOGM. 

We always have $v_k < v_{i^*+1}$, otherwise, $s_k$ would have been selected by OGA instead of $s_{i^*+1}$.
Now, consider the scenario depicted in Fig. \ref{fig:sensorfailure2}, where $v_k \ge u_{i^*+2}$. 
It is the worst case for LOGM in terms of the location of $v_k$. The reason is $s_k$ along with  $s_{i^*+2}$ (or other sensors selected by OGA in a general scenario) completely cover $s_{i^*+1}$ and make it redundant for coverage. In other words, $s_{i^*+1}$ is adopted by LOGM and increases the cost, while it does not provide new coverage.

Using Lemma \ref{lem:increase}, it is obvious that
\begin{equation}
\label{eq:OGA_is_less}
|S^g(v_k, b)| \le |S^l(v_{i^*}, b)|.  
\end{equation}
and
\begin{equation}
\label{eq:OGA_is_more}
|S^l(v_{i^*+1}, b)|  \le |S^g(v_k, b)|.  
\end{equation}
Based on (\ref{eq:OGA_is_less}), OGLM might result in selecting more sensors compared to the optimal. On the other hand,  (\ref{eq:OGA_is_more}) states that it is impossible for LOGM to select more sensors after (i.e. on the right-hand side of) $s_{i^*+1}$, compared to OGA. Hence, the worst case for LOGM happens when $s_{i^*+1}$ is completely covered by $s_k$ and $s_{i^*+2}$ in the scenario depicted in Fig. \ref{fig:sensorfailure2}.
 In this case, OGA does not select $s_{i^*+1}$, but the number of other sensors is the same as LOGM. Hence, in this case, $|S^l(v_G, b)| = | S^g(v_G, b)|+1$.

Finally, since $|S^l(a, v_G)|=|S^g(a, v_G)|$ and $|S^g(v_G, b)| \le |S^l(v_G, b)| \le |S^g(v_G, b)|+1$ in general, LOGM selects at most $1$ sensor more than OGA (i.e. the optimal method). 


\end{proof}

\begin{figure}
\center
\includegraphics[height=0.17\textheight,width=0.5\textwidth]{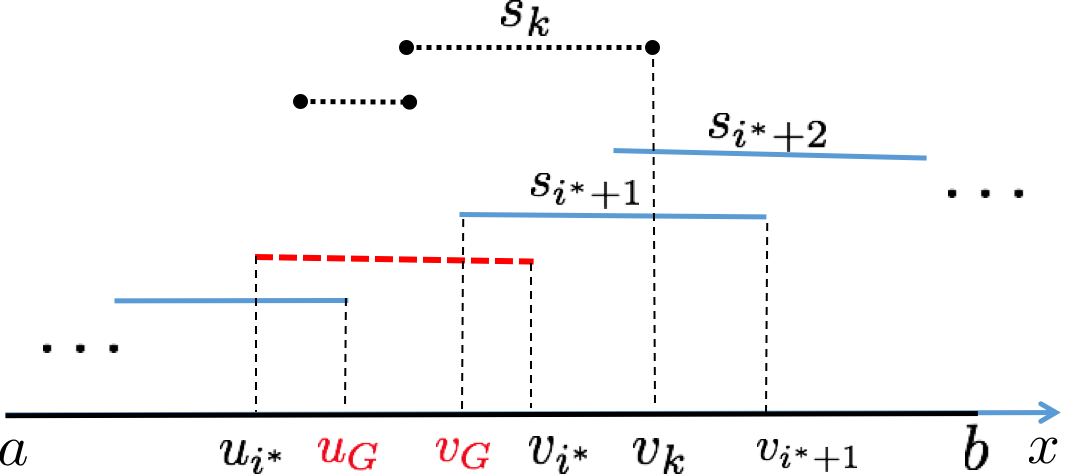}
\caption{A more detailed version of the sensor failure scenario depicted in Fig. \ref{fig:sensorfailure}}
\label{fig:sensorfailure2}
\end{figure}

\begin{theorem}
\label{thm:multi_gap}
In case of having $m$ gaps, LOGM selects at most $m$ sensors more than the optimal approach, i.e. $n_{OGA} \le n_{LOGM} \le n_{OGA}+ (2m-1)$.
\end{theorem}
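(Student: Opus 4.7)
The natural plan is induction on the number of gaps $m$. The base case $m = 1$ is exactly Theorem \ref{thm:LOGM_max_1}, which yields the desired bound $n_{LOGM} \le n_{OGA} + 1 = n_{OGA} + (2 \cdot 1 - 1)$.

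For the inductive step, assume the claim for $m - 1$ gaps and consider a configuration of $m$ gaps $G_1, \ldots, G_m$ ordered from left to right along the $x$ axis. The strategy is to isolate the rightmost gap $G_m$ from the preceding ones. We may assume the gaps are non-adjacent, since otherwise merging adjacent gaps into a single gap only decreases $m$. Fix a separator point $p$ strictly between $v_{G_{m-1}}$ and $u_{G_m}$. Because both LOGM and a fresh run of OGA are order-based, their selections on $[a, p]$ depend only on $G_1, \ldots, G_{m-1}$, and their selections on $[p, b]$ depend, up to a bounded boundary effect at $p$, only on $G_m$.

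The accounting then proceeds as follows. By the inductive hypothesis applied to the subproblem on $[a, p]$ with its $m - 1$ gaps, LOGM selects at most $2(m - 1) - 1 = 2m - 3$ more sensors than OGA in this region. By Theorem \ref{thm:LOGM_max_1} applied to the single-gap subproblem on $[p, b]$, LOGM selects at most one further extra sensor, giving a partial total of $2m - 2$. The remaining extra is attributed to the boundary at $p$: the sensor adopted by LOGM that straddles $p$ may be one that a fresh OGA would replace, or it may force LOGM to retain a locally redundant sensor just to the left of $G_m$. Summing the contributions yields $(2m - 3) + 1 + 1 = 2m - 1$, matching the claimed bound.

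The main obstacle is formalizing this boundary interaction at $p$ so that it contributes at most one additional sensor. The step requires a local structural argument analogous to Lemma \ref{lem:increase}, showing that a single misaligned adopted sensor at $p$ cannot trigger a cascade of extras further to the right. The key observation to prove carefully is that OGA's greedy rule of maximizing the right endpoint absorbs any local misalignment within a bounded neighborhood of the separator, so that the cost of the straddling sensor is paid exactly once and the per-gap extra of $2$ in the inductive step does not blow up to $3$.
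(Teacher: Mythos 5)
Your overall accounting is the same as the paper's: each additional gap beyond the first can cost LOGM at most two extra sensors (one in mending the gap itself, one from a redundant adopted sensor just after it), while the first gap costs at most one, giving $1+2(m-1)=2m-1$. The paper organizes this as a direct decomposition of $[a,b]$ into the $2m+1$ intervals $[a,u_{G_1}],(u_{G_1},v_{G_1}],(v_{G_1},u_{G_2}],\dots,(v_{G_m},b]$ and charges at most one extra sensor to each interval except the first two; you organize it as an induction on $m$ with a separator point $p$ between $G_{m-1}$ and $G_m$. The arithmetic is consistent either way.

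The genuine gap is the step you yourself flag as the ``main obstacle'': you assert, but do not prove, that the boundary interaction at the separator contributes at most one extra sensor. This is exactly the content that the paper's proof supplies and that your write-up is missing, and without it the induction does not close. The needed argument is short. Let $s_j$ be the last sensor of the fresh OGA run that covers the separator region (equivalently, the last previously adopted sensor covering $u_{G_2}$), with right endpoint $v_j \ge u_{G_2}$. The fresh OGA continues its selection from $v_j$, whereas LOGM restarts its locally optimal selection from $u_{G_2}$; since $\{s_j\}\cup S^g(v_j,v_{G_2})$ is itself a feasible cover of $[u_{G_2},v_{G_2}]$ and LOGM's selection over that interval is optimal, one gets $|S^l(u_{G_2},v_{G_2})| \le |S^g(v_j,v_{G_2})|+1$. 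In other words, the misalignment is absorbed because the single straddling sensor covers the whole stretch $[u_{G_2},v_j]$ by itself; combined with the monotonicity of Lemma \ref{lem:increase} this caps the boundary cost at one and prevents the cascade you worry about. You should also make explicit why the inductive hypothesis legitimately applies to the subproblem on $[a,p]$: the fresh OGA run on $[a,b]$, stopped at the moment it first covers $p$, has selected exactly $S^g(a,p)$, so restricting to $[a,p]$ does not change the optimal count there. With these two points supplied, your induction is sound; as written, the proposal states the right bound but leaves its crucial step unproven.
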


\begin{proof}
Consider the case of $m=2$, where we have two gaps, $G_1$ and $G_2$.
A sample scenario is depicted in Fig. \ref{fig:multi_gap_scenario}. If there is overlap between the gaps, i.e. failure of neighboring sensors, the gaps are connected and the problem can be considered as a single gap one, where Theorem \ref{thm:LOGM_max_1} holds. If there is no overlap, the gaps are necessarily apart by at least one OGA sensor. OGA sensors are the ones initially selected by OGA.

\ignore{
Consider a new arbitrary point between the two gaps and call it $p_1$. Based on Theorem \ref{thm:LOGM_max_1}, the number of sensors to mend $G_1$ and cover all areas in the interval $[a, a_1]$ is one more for LOGM as compared to OGA. The same applies to $[a_1, b]$, which now has one gap ($G_2$). Hence the maximum possible number of extra sensors will be 2. 
It is straight forward to generalize the discussion to the case of $m$ gaps and conclude that LOGM selects at most $m$ sensors more than OGA, i.e. one extra sensor per gap. It completes the proof.
}

Using a discussion similar to that of the proof of Theorem \ref{thm:LOGM_max_1}, we conclude that the selected sensors for covering $[a, v_k]$ are the same in both OGA and LOGM. Furthermore, LOGM might select one extra sensor to cover $[v_k, v_{i_1^*+1}]$.

For the interval between the gaps, i.e.  $[v_{i_1^*+1}, u_{G_2}]$, 
LOGM adopts the previous OGA sensors. Hence, $|S^l(v_{i_1^*+1}, u_{G_2})|$ is optimal. As seen in the figure, the most right hand-side sensor selected by OGA to cover $[v_{i_1^*+1}, u_{G_2}]$ is $s_j$.
We have $v_j \ge u_{G_2}$ since $s_j$ must cover $u_{G_2}$. 

The key point in comparing the cost of methods between the gaps is that LOGM continues optimal selection from $u_{G_2}$, while OGA performs that from $v_j$. Therefore,
\begin{equation}
|S^l(u_{G_2}, v_{G_2})| \ge |S^g(v_j, v_{G_2},)|.
\end{equation}
On the other hand, the difference cannot be more than 1. The reason is LOGM performs optimal selection for gap mending, i.e. covering $[u_{G_2}, v_{G_2}]$. This optimal selection cannot be worse than that of OGA, i.e. adding $s_j$ to $S^l(u_{G_2}, v_{G_2})$ as an extra sensor to cover $[u_{G_2}, v_j]$ in addition to adopting OGA sensors selected to cover $[v_j, v_{G_2}]$.


After mending $G_2$, there is a possibility that LOGM selects one more sensor compared to the optimal due to redundancy of $s_{i_2^*+1}$ based on a discussion similar to the case of $s_{i_1^*+1}$.

In the case of $m$ gaps, we can decompose the space into the non-overlaping intervals:
$[a, u_{G_1}], (u_{G_1}, v_{G_1}], (v_{G_1}, u_{G_2}], (u_{G_2}, v_{G_2}], \cdots, (v_{G_m}, b]$. There is a total of 2m+1 intervals. For the first two intervals, both methods select the same set of sensors, hence no gain in OGA. In the rest of the intervals, LOGM may fall behind with one sensor per interval hence the total maximum will be 2m-1. 


\end{proof}

\begin{figure}
\center
\includegraphics[height=0.15\textheight,width=0.5\textwidth]{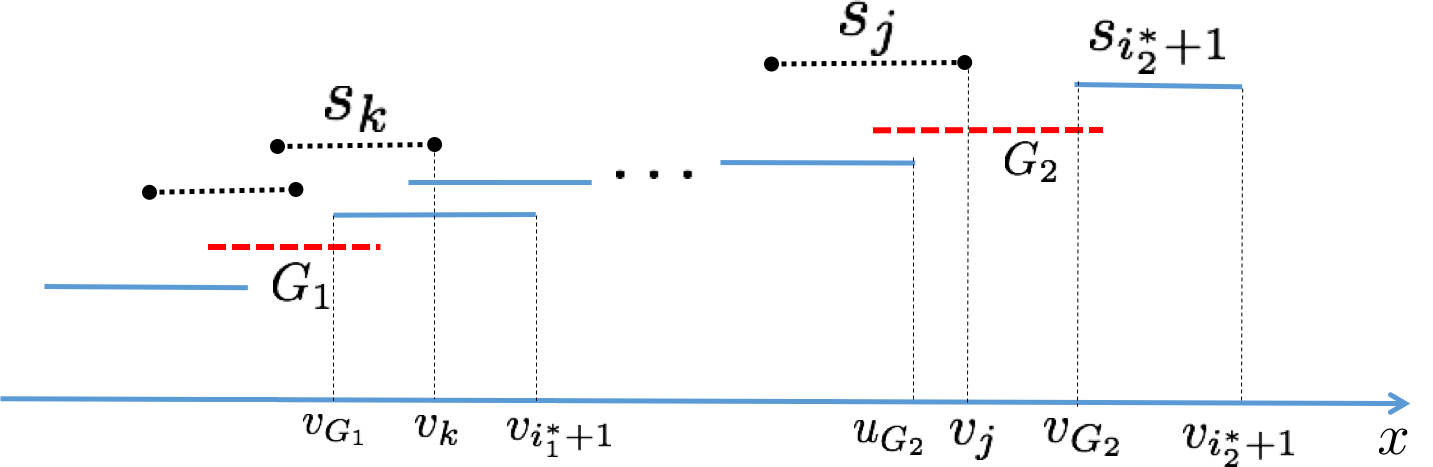}
\caption{A two-gap scenario}
\label{fig:multi_gap_scenario}
\end{figure}

\section{Simulation results}
\label{sec:simres}

In this section, we confirm the optimal/efficient performance of the proposed algorithms via extensive simulations.

\subsection{Sensor selection for weak 1-barrier coverage}
In this scenario, the performance of OGA is compared with a greedy benchmark in terms of the provided coverage vs. the number of selected sensors. 
The benchmark algorithm is inspired by a well-known approach  in designing greedy algorithms in the sensor networks literature \cite{fusco09, wang14}.
The algorithm updates a set of uncovered targets, i.e. the set of targets uncovered until the current step. It starts by selecting the sensor with the maximum coverage in $S$, i.e. the one that has the maximum number of covered targets. Then, the covered targets are removed from the set of uncovered targets and a new selection is performed based on the maximum coverage. The procedure continues until all of the targets are covered (barrier is formed) or there are no sensors to select from.

\begin{figure}
\centering
\subfloat[ Scenario 1]{\label{fig:scen1}\includegraphics[width=0.5\textwidth]{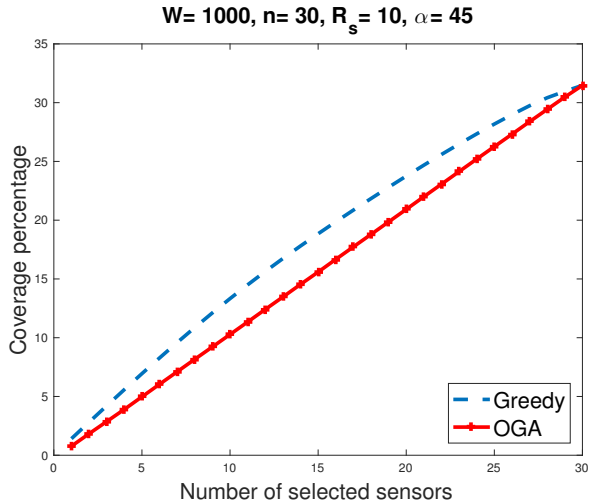}}\\
\label{fig:thirty}
\subfloat[ Scenario 2]
{\label{fig:scen2}\includegraphics[width=0.5\textwidth]{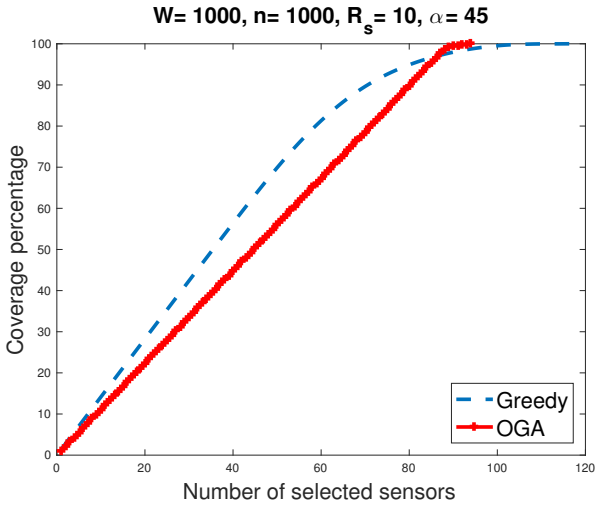}}\\
\subfloat[ Scenario 3 ]{\label{fig:scen3}\includegraphics[width=0.5\textwidth]{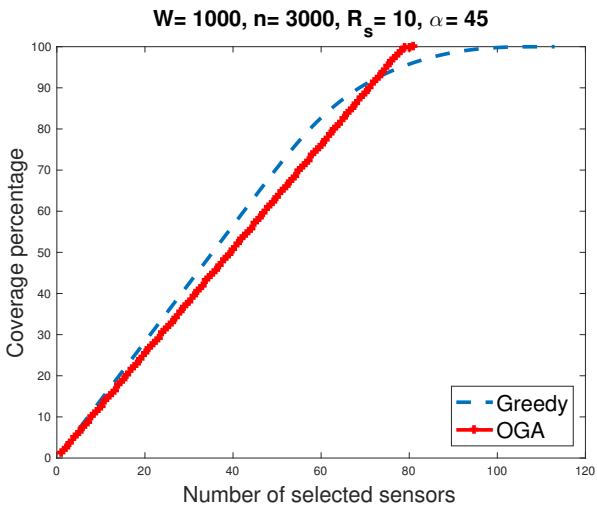}}\\
\caption{Performance comparison between OGA and the greedy benchmark in terms of the provided coverage versus the number of selected sensors}
\label{fig:coverage}
\end{figure}


We consider a rectangle with a width and height of $1000$ and $10$ meters, respectively, as the coverage region of interest. In the following simulations, different numbers of (available) sensors are considered in 3 distinct scenarios.
The radius ($R_s$) and field-of-view ($\alpha$) of all sensors are the same.
However, the direction of each sensor is selected uniformly at random from $[0, 360)$.
Furthermore, the sensor deployment is line-based, where sensors are normally deviated from a straight line with a mean of zero and standard deviation of 10 meters along the $x$ axis. The deviation in the $y$ direction does not affect the weak barrier coverage performance in our 1D model. 

Fig. \ref{fig:coverage} depicts the provided (cumulative) coverage percentage versus the number of selected sensors in OGA and the greedy benchmark.  Coverage percentage is the portion of the $[0,W]$ interval covered by the selected sensors.  For instance, in Fig. 7.a, for the (first) $10$ selected sensors, the coverage percentage is $12$ and $10$ in the greedy method and OGA, respectively. The selected sensors are different in general as the two algorithms take different strategies for sensor selection.
 As seen, there is an intersection point, where the two curves intersect besides the origin. Before this point, the greedy benchmark outperforms OGA since it provides greater coverage by selecting every sensor. After the intersection, OGA outperforms the benchmark and maintains the lead until $100 \%$ coverage is achieved (only in gap free scenarios). Hence, the intersection point is a turning point. Although OGA starts by selecting less efficient sensors in terms of coverage, it ends up with smaller total number of selected sensors for complete coverage and outperforms the benchmark in total. 

In Fig. \ref{fig:coverage}, from Scenario 1 (Fig. \ref{fig:scen1}) to Scenario 3 (Fig. \ref{fig:scen3}), the number of available sensors is increased from $30$ to $3000$ and the achievable coverage percentage changes from about $30 \%$ to $100 \%$. As seen, OGA selects a total of 32, 95 and 82 sensors in Scenario 1 to 3, respectively, while the greedy method selects 32, 118 and 115 sensors in the corresponding scenarios. Hence, OGA always selects equal or smaller number of sensors compared to the greedy benchmark for the same amount of  achievable coverage.
Furthermore, the greater the number of available sensors, the better OGA performs compared to the greedy benchmark, i.e. the difference in the number of selected sensors (cost difference) increases.

The intersection point and the cost difference for different number  of available sensors are depicted in Fig. \ref{fig:diff_intersec}.
It is assumed that the width of the coverage region is $1000$ meters. Each sensor  has a radius of $10$ meters and a field-of-view of $90 \degree$. 
As seen, when the number of available sensors is too small (high probability of having gaps), both algorithms select the majority of sensors (almost the same sensors) to maximize the coverage. Hence, the intersection is almost equal to the number of available sensors and happens at the top right corner as depicted in Fig. \ref{fig:scen1}.
As the number of available sensors increases, the intersection also increases until a turning point, where there are enough sensors for OGA to outperform the greedy algorithm. At this point, the intersections starts to decrease and the cost difference increases.

\begin{figure}[!t]
\centering
\includegraphics[width=0.5\textwidth]{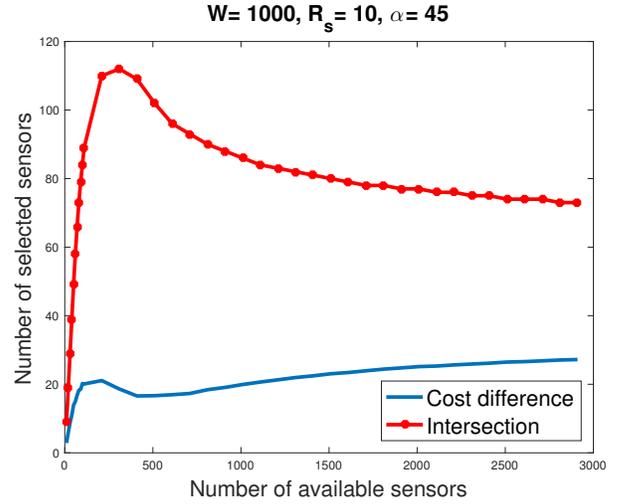}
\caption{The intersection point and the cost different between OGA and the greedy benchmark}
\label{fig:diff_intersec}
\end{figure}

%

\subsection{Sensor selection for weak $K$-barrier coverage}

As stated, a graph-based method is proposed in \cite{wang14} for solving the strong/weak K-barrier coverage problems. In this method, the graph nodes represent the sensors. The nodes are connected via an edge if they have overlap. In the case of weak barrier coverage, the overlap is defined as overlap between the intervals projected onto the $x$-axis as we defined here. Hence, sensor coverage areas could be non-overlapping in the real world 2D deployment, while their projections on $x$-axis are. 
It is proved that the existence of $K$ strong/weak barriers is equivalent to that of $K$ vertex-disjoint paths in the proposed graph. In order to find the possible $K$ paths, a greedy algorithm is proposed, which selects $K$ paths one after another using the modified Dijkstra's algorithm \cite{dij}. The algorithm removes the selected node after the selection of each path. 
We use this greedy algorithm as a benchmark for the weak $K$-barrier coverage problem.

 In our problem, all sensors are stationary. In the case of having gaps, i.e. lack of at least $K$ sensors to cover each target, we use the new gap sensors introduced before. In \cite{wang14}, fictitious mobile sensors are assumed to cover these gaps.
 In order to use their method as the benchmark, we use new non-existing gap sensors instead of mobile ones. Hence, a (multi-) mobile sensor(s) covering a single gap in their method is replaced with a single gap sensor in our implementation. In fact,  the graph edge weights are changed but, the same algorithm is used to find $K$-barriers, i.e. $K$ vertex-disjoint shortest paths.

Fig. \ref{fig:koga} depicts the results of comparison between $K$-OGA and the benchmark algorithm in terms of the average number of selected sensors for two values of $K$. In the simulations, $W=100$ m, $R_s=10$ m  and $\alpha=45 \degree$.
As seen, $K$-OGA selects less sensors compared to the benchmark. The difference is greater when the number of available sensors is less or when $K$ is greater. 

The benchmark algorithm is essentially equivalent to forming $K$ independent weak barrier without taking into account that some of the targets from the $i^{th}$ step are already $i+1$-covered, i.e. repeated coverage. It results in higher cost (i.e. more selected sensors) in the benchmark algorithm compared to the optimal greedy $K$-OGA.

\begin{figure}[!t]
\centering
\includegraphics[height=0.3\textheight, width=0.5\textwidth]{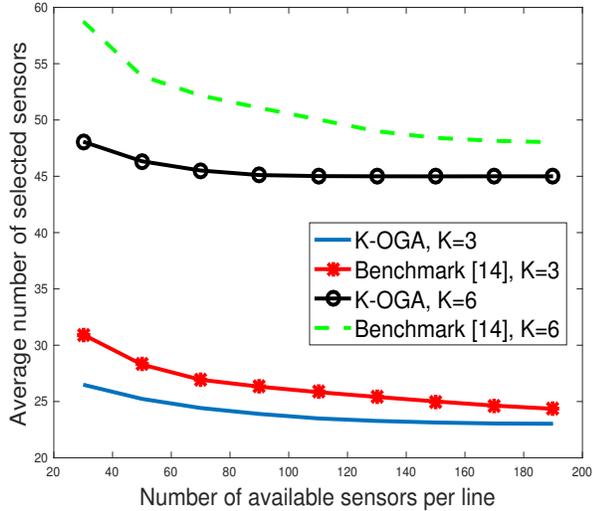}
\caption{The performance comparison of $K$-OGA and the benchmark for two different values of $K$}
\label{fig:koga}
\end{figure}

\subsection{Gap mending for single sensor failure}

In this scenario, we study the performance of LOGM in case of single sensor failure. We assume that OGA has been previously applied to select an optimal set of sensors. Then, a single sensor fails and we compare the performance of LOGM with a new round of OGA in mending the gap.

Fig. \ref{fig:single_failure} shows the results of applying LOGM and the new OGA in the problem of single sensor failure for different number of available sensors. The settings are the same as the previous coverage scenario, i.e. $W=1000$ m, $R_s=10$ m, $\alpha=45 \degree$, except that the deployment is random, i.e. Poisson so that enough sensors are available for mending the gap. For each $n$ (number of available sensors), we perform 1000 random realizations of sensors locations based on a line-based deployment. Then, we start removing the sensors selected by the old OGA method and compute the difference in the number of sensors selected by LOGM and the new OGA. Among the 1000 realizations for each  $n$, we compute the maximum and the average of the cost difference.

As seen, the maximum cost difference is $1$ over the entire interval, which confirms Theorem \ref{thm:LOGM_max_1}. 
Furthermore, the average difference is less than $0.2$ for the depicted range of number of available sensors ($n$). For large $n$, the maximum over all realizations become zero, which means both methods select the same number of sensors.

\begin{figure}[!t]
\centering
\includegraphics[height=0.3\textheight, width=0.5\textwidth]{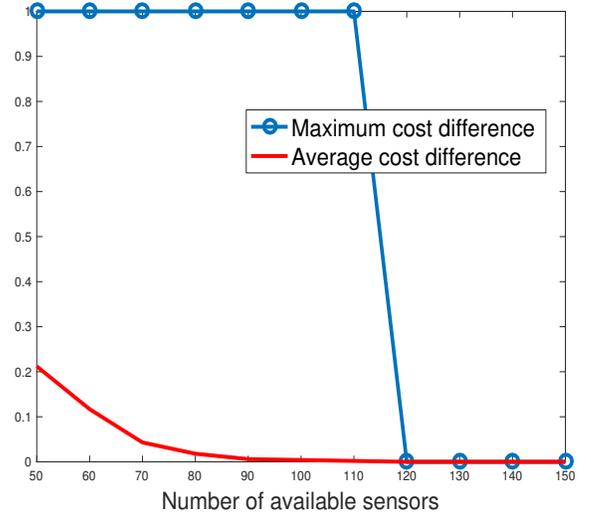}
\caption{The comparison between LOGM and the new OGA performance in the single sensor failure scenario}
\label{fig:single_failure}
\end{figure}

\subsection{Gap mending for multiple sensors failure (multi-gap scenario)}
 
Here, we study the performance of LOGM in the case of multiple sensors failure and compare it with that of the optimal method, i.e. a new round of OGA. 
There are $1000$ sensors in this scenario and we have $W=100$ m, $R_s=2$ m, $\alpha=45 \degree$. The sensor deployment settings are the same as the single gap scenario. 

Fig. \ref{fig:multi_gap}, depicts the maximum and average difference in the number of selected sensors in both methods. As seen, both average and maximum differences are increasing linearly with the number of gaps. 
In addition, the maximum values are in concord with theorem \ref{thm:multi_gap}, i.e. equal to $2m-1$, where $m$ is the number of gaps.
These observations demonstrate the acceptable scalability of LOGM in the number of gaps.

\section{Conclusion}
\label{sec:conclusion}
The sensor selection for line coverage problem is redefined and an optimal greedy algorithm, called OGA, is proposed to solve it. 
It is shown that the existing order can be utilized to design optimal greedy algorithm with lower complexity compared to the state-of-the-art.
The better performance of OGA compared to a greedy benchmark is studied in simulations. 
Furthermore, the weak $K$-barrier coverage is modeled as an instance of the line coverage problem and solved optimally using a novel algorithm called $K$-OGA. $K$-OGA involves less computational complexity compared to the state-of-the-art. Moreover, an efficient and locally implementable algorithm for gap mending is proposed, called LOGM, which
selects at most one sensor more than the optimal. It is also demonstrated numerically that in the multi-gap scenario, the cost distance of LOGM from the optimal increases linearly with the number of gaps. 


\begin{figure}[!t]
\centering
\includegraphics[height=0.3\textheight, width=0.5\textwidth]{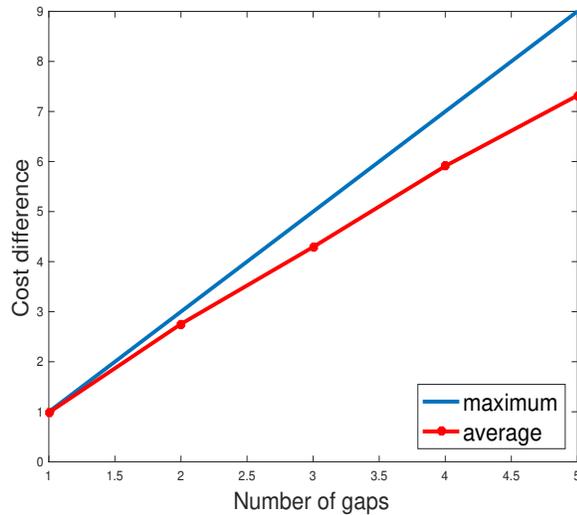}
\caption{The comparison between LOGM and the new OGA performance in the multi-gap scenario}
\label{fig:multi_gap}
\end{figure}

\bibliographystyle{IEEEtran}
\bibliography{IEEEabrv,Journal}


\ifCLASSOPTIONcaptionsoff
  \newpage
\fi
\ignore{
\begin{IEEEbiography}{Michael Shell}
Biography text here.
\end{IEEEbiography}

\begin{IEEEbiographynophoto}{John Doe}
Biography text here.
\end{IEEEbiographynophoto}


\begin{IEEEbiographynophoto}{Jane Doe}
Biography text here.
\end{IEEEbiographynophoto}
}

\end{document}